\newtheorem{Theorem}{Theorem}
\newtheorem{Lemma}{Lemma}
\newtheorem{Definition}{Definition}
\theoremstyle{definition}
\newtheorem{Example}{Example}
\DeclareMathOperator*{\argmax}{argmax}
\begin{document}

\author{Farhad Shirani, Shahram Shahsavari, and Elza Erkip \\
Electrical and Computer Engineering Department \\
New York University, NY, USA \\\date{} }

\title{On the Rates of Convergence in Learning of Optimal Temporally Fair Schedulers}


\maketitle
\begin{abstract}
 Multi-user schedulers are designed to achieve optimal average system utility (e.g. throughput) subject to a set of fairness criteria. In this work, scheduling under temporal fairness constraints is considered.
Prior works have shown that a class of scheduling strategies called threshold based strategies (TBSs) achieve optimal system utility under temporal fairness constraints. 
The optimal TBS thresholds are determined as a function of the channel statistics.
 In order to provide performance guarantees for TBSs in practical scenarios --- where the scheduler learns the optimal thresholds based on the empirical observations of the channel realizations --- it is  necessary to evaluate the rates of convergence of TBS thresholds to the optimal value.   
In this work, these rates of convergence and the effect on the resulting system utility are investigated. It is shown that the best estimate of the threshold vector is at least $\omega(\frac{1}{\sqrt{t}})$ away from the optimal value, where $t$ is the number of observations of the independent and identically distributed channel realizations. Furthermore, it is shown that under long-term fairness constraints, the scheduler may achieve an average utility that is higher than the optimal long-term utility by violating the fairness criteria for a long initial period. Consequently, the resulting system utility may converge to its optimal long-term value from above. The results are verified by providing simulations of practical scheduling scenarios. 

\let\thefootnote\relax\footnotetext{This work is supported by NYU WIRELESS Industrial Affiliates and National Science Foundation grants EARS-1547332 and NeTS-1527750. This work was done when S. Shahsavari was at NYU.}

\end{abstract}

\section{Introduction}

Network information theory studies the set of optimal achievable rates in a cellular system for a \textit{given set of users}. Scheduling strategies determine \textit{which users} are activated by the base station (BS) at each transmission block \cite{ kulkarni2003opportunistic,shahsavari2019general,liu-jsac,issariyakul2004throughput,rosenberg-short-term}. The choice is made in accordance with the users' fairness demands (e.g. temporal demands), the resulting system utility (e.g. sum-rate), and the computational complexity constraints at the transmitter and receiver which limit the number of users which can be activated simultaneously. Once the choice of active users is made, physical layer techniques  are used for communication over the resulting channel \cite{ahlswede1973multi,yu2004sum}. 



Temporally fair schedulers provide each user with a minimum temporal share in order to control the average delay, and restrict the maximum power drain of users by placing upper-bounds on their temporal shares \cite{kulkarni2003opportunistic,issariyakul2004throughput,liu-jsac, shahsavari2019general}.
In \cite{shahsavari2019general,arxiv_PIMRC,shiranishortterm}, we consider the user scheduling problem for non-orthogonal multiple access (NOMA) and full-duplex (FD) systems under temporal fairness constraints, and show that optimal average system utility is achieved using a class of scheduling strategies called \textit{Threshold Based Strategies} (TBSs). A TBS assigns real-valued thresholds to each of the users in the network. At each resource block a subset of users is activated based on the resulting system utility --- which is calculated based on the channel realizations in that time-slot --- and the thresholds assigned to each of the users. The optimal thresholds assigned to each user in the TBS are functions of the users' channel statistics. In practice, the scheduler does not typically have prior knowledge of the users' channel statistics. Rather, it gains an empirical estimate of the statistics using its observations of the prior channel realizations obtained through channel estimation and feedback techniques. Consequently, the scheduler cannot calculate the optimal thresholds prior to the start of communication, and it has to update the thresholds in an online fashion as it gains a more accurate empirical estimate of the channel statistics by accumulating observations of independent and identically distributed channel realizations \cite{liu-jsac,shahsavari2019general}. Consequently, in order to provide performance guarantees for TBSs in terms of achievable system utility, one needs to evaluate the rate of convergence (RoC) of the thresholds to  the optimal thresholds. 

In this paper, we investigate the RoC of the thresholds used in the TBS strategy to the optimal threshold values as well as the RoC of the resulting average system utility to the optimal utility. We show that the threshold values converge with rate at most\footnote{We write $f(x)=O(g(x))$ if $\lim_{n\to \infty}\frac{f(x)}{g(x)}< \infty$.} $O(\frac{1}{\sqrt{t}})$, where $t$ is the number of the prior empirical observations of the users' channel statistics. 
This bound on the RoC of the TBS thresholds can be used to derive gaurantees in terms of scheduler's short-term fairness and average system utility. 
We also investigate the RoC for the average system utility, and show the existence of a scheduling strategy whose utility converges to the optimal long-term utility from above. Loosely speaking, this is achieved by violating the temporal fairness constraints for a long initial period while accumulating additional system utility. After this initial period, having gained an accurate estimate of the channel statistics, the scheduler operates at near-optimal utility while satisfying the fairness constraints. 
We verify the results by providing simulations of practical scenarios. 


\textit{Notation:} The set of numbers $\{1,2,\cdots, n\}, n\in \mathbb{N}$ is represented by $[n]$.  
The vector $(x_1,x_2,\cdots, x_n)$ is written as $x^n$. The $m\times t$ matrix $[g_{i,j}]_{i\in [m], j\in [t]}$ is denoted by $g^{m\times t}$.
 For an event $\mathcal{A}$, the random variable $\mathbbm{1}_{\mathcal{A}}$ is the indicator function. For the continuous random variable $X$ whose probability density function (PDF) depends on the model parameter $\theta$, we write the PDF as $f_X(x;\theta), x\in \mathbb{R}$. 

\section{Preliminaries}
\label{sec:Prel}
We focus on opportunistic NOMA scheduling under temporal fairness constraints which was formulated in \cite{shahsavari2019general}. The main results may be extended to orthogonal multiple access (OMA) and FD systems with minor modifications as in \cite{arxiv_PIMRC}.
\subsection{System Model}
We consider user scheduling in a single-cell with $n$ users and one base-station (BS). Due to restricted computation complexity at network terminals, a limited number of users can be activated either in uplink (UL) or downlink (DL) at each time-slot. Subsets of users $\mathcal{V}_j, j\in [m]$ which can be activated simultaneously are called \textit{virtual users}, where $m\leq 2^n$. The set of all virtual users is denoted by $\mathsf{V}=\{\mathcal{V}_1,\mathcal{V}_2,\cdots,\mathcal{V}_m\}$. The choice of the active virtual user at a given time-slot determines the resulting system utility at that time-slot. The vector of system utilities due to activating each of the virtual users is called the \textit{performance vector}. For instance, the performance vector can be taken as the vector of sum-rates resulting from activating each of the virtual users. In this case, the performance vector is random and its value depends on the realization of the underlying  time-varying channel. In a given time-slot, the system utilities due to activating different virtual users may depend on each other. However, the performance vectors in different time-slots are assumed to be independent of each other, for example due to independence of the channels.
\begin{Definition}[\bf{Performance Vector}]
The vector of jointly continuous variables $(R_{1,t},R_{2,t}, \cdots, R_{m,t}), t\in \mathbb{N}$ is the performance vector of the virtual users at time $t$. The sequence $(R_{1,t},R_{2,t}, \cdots, R_{m,t}), t\in \mathbb{N}$ is a sequence of independent vectors distributed identically  according to the joint density $f_{R^m}$.
\end{Definition}

\begin{Example}[{\bf Parameters of the Performance Vector}]
\label{Ex:Ray}
Consider an OMA downlink scenario where only a single user may be activated at each time-slot. There are $m=n$ virtual users in this case, where $\mathcal{V}_i=\{u_i\}, i\in [n]$. The system utility is defined as the network throughput. Let $H_{i,t}=\beta_{i} G_{i,t}$ be the propagation channel coefficient between user $u_i$ and the BS at time-slot $t$. In this model, $\beta_{i}$ captures large-scale channel variations such as distance-dependent path loss and shadowing which mainly depend on the location of the user $u_i$. Furthermore, $G_{i,t}$ captures the small-scale variations of the channel caused by multi-path fading which depends on the scattering profile of the propagation environment. We assume that $\beta_{i}$ is constant over the time interval of interest and $G_{i,t}, i\in [n], t\in \mathbb{N}$ follows a complex normal distribution as in Rayleigh fading model. Additionally, it is assumed that $G_{i,t}, i\in [n]$ are independent over time. Consequently, channel coefficients $H_{i,t}, i\in [n]$ are also independent over time as $\beta_{i}, i\in [n]$ are constant. The resulting signal to noise ratio (SNR) of user $u_i$ at time-slot $t$ is defined as $\text{SNR}_{i,t}={p|H_{i,t}|^2}/{\sigma^2}$ where, $p$ and $\sigma^2$ denote the downlink transmit power and noise power, respectively. Consequently, the performance value for virtual user $\mathcal{V}_i$ at time-slot $t$ is defined by $R_{i,t}=\max\{\log_2(1+\text{SNR}_{i,t}),\gamma_{max}\}$, where $\gamma_{max}$ models the maximum spectral efficiency in the system. As a result, the PDF of the performance vector $R^m$, i.e. $f_{R^m}$, depends on the distribution of the underlying propagation channels which themselves depend on a set of model parameters such as the users' locations. The performance vector in NOMA systems can be parametrized similarly. Generally, we assume that the statistics of the performance vector is parametrized by some fixed model parameters $(\theta_1,\theta_2,\cdots,\theta_n)$, so that the PDF of $R^m$ is written as $f_{R^m}(r^m;\theta_1,\theta_2,\cdots,\theta_n), r^m\in \mathbb{R}^m$. 
\end{Example}

Temporal fairness requires that the fraction of time-slots in which each user is activated be bounded from below (above). The vector of lower (upper) bounds $\underline{w}^n$ ($\overline{w}^n)$ is called the lower (upper) temporal demand vector. The objective is to design a scheduling strategy satisfying the temporal fairness constraints while maximizing the resulting system utility. Accordingly, a scheduling strategy is defined as follows.

\begin{Definition}[\bf{Scheduler}] \label{Def:Strategy}
Consider the scheduling setup parametrized by $(n,\mathsf{V}, \underline{w}^n, \overline{w}^n, f_{R^m})$. A scheduling strategy $Q= (Q_t)_{t\in \mathbb{N}}$ is a family of (possibly stochastic) functions $Q_t: \mathbb{R}^{m\times t} \to \mathsf{V}, t\in \mathbb{N}$, where:
\begin{itemize}[leftmargin=*]
    \item { The input to $Q_t, t\in \mathbb{N}$ is the matrix of performance vectors $R^{m\times t}$ which consists of $t$ independently and identically distributed column vectors with distribution} $f_{R^m}$.
\item{ The temporal demand constraints are satisfied:
\begin{align}
P\left(\underline{w}_i\leq \underline{A}_{i}^{Q}\quad \& \quad \overline{A}^Q_i \leq \overline{w}_i, i\in [n]\right)=1,
\label{Def:tem_fair}
\end{align}}
\end{itemize}
where, the temporal share of user $u_i, i\in [n]$ up to time $t\in [s]$ is defined as
\begin{align}
&A^Q_{i,t}=\frac{1}{t}\sum^t_{k=1}\mathbbm{1}_{\big\{u_i\in Q_k(R^{m\times k})\big\}}, \forall i\in [n], t\in \mathbb{N},\\
&\underline{A}^Q_{i}= \liminf_{t\to \infty}A^Q_{i,t}, \qquad 
\overline{A}^Q_{i}= \limsup_{t\to \infty}A^Q_{i,t}
\label{Def:temp_share}
\end{align}
\end{Definition}

We consider homogeneous systems where the scheduler is allowed to activate subsets of at most $N_{max}$ users at each time-slot.  More precisely, for a homogeneous multi-user system with $n$ users and maximum number of active users $N_{max}\leq n$, the set of virtual users is defined as
\[\mathsf{V}= \left\{\mathcal{V}_j\subset \mathcal{U}\big| |\mathcal{V}_j|\leq N_{max}\right\}. \]
We write $(n,N_{max}, \underline{w}^n,$ $\overline{w}^n,f_{R^m})$ instead of $(n,\mathsf{V}, \underline{w}^n,$ $\overline{w}^n,f_{R^m})$ to characterize a 
homogeneous system.
A scheduling setup where the user temporal shares are required to take a specific value, i.e. ${A}_{i,s}^Q=w_i, i\in [n]$, is called a setup with \textit{equality temporal constraints} and is parametrized by $(n,N_{max}, {w}^n, {w}^n, f_{R^m})$. The average system utility of a scheduler is defined as:

\begin{Definition}[\bf{System Utility}]
For an $s$-scheduler $Q$:
\begin{itemize}[wide=0pt]
    \item {The average system utility up to time t, is defined as 
\begin{align}
U^Q_t&=\frac{1}{t}\sum^t_{k=1}\sum_{j=1}^m R_{j,k}\mathbbm{1}_{\big\{Q_k(R^{m\times k})=\mathcal{V}_{j}\big\}}.
\label{Def:sys_utility}
\end{align}}
\item{The variable $U^Q$ is called the average system utility for the scheduler, where
\begin{align*}
    U^Q=\liminf_{t\to \infty} U_t^Q
\end{align*}}
A scheduler $Q^*$ is optimal if and only if 
$Q^*\in\argmax_{Q\in \mathcal{Q}} U^Q$,
where $\mathcal{Q}$ is the set of all temporally fair schedulers for the scheduling setup. The optimal utility is denoted by $U^*$.
\end{itemize}
\end{Definition}
\subsection{Prior Literature} \label{subsec:lit}
In \cite{shahsavari2019general}, we showed that a class of scheduling strategies called TBSs achieve the optimal system utility subject to temporal fairness constraints. A TBS is formally defined below.

\begin{Definition}[\bf{TBS}]
For the scheduling setup $(n,N_{max}, \underline{w}^n, \overline{w}^n, f_{R^m})$ a threshold based strategy (TBS) is characterized by the vector $\lambda^n\in \mathbb{R}^n$. The strategy $Q_{TBS}(\lambda^n)=(Q_{TBS,t})_{t\in \mathbb{N}}$ is defined as:
\begin{align}
Q_{TBS,t}\big(R^{m\times t}\big)=\argmax_{\mathcal{V}_j\in\mathsf{V}} ~R_{j,t}+\sum_{i=1}^n \lambda_i \mathbbm{1}_{\{u_i\in\mathcal{V}_{j}\}}, ~t\in \mathbb{N}.
\label{Eq:thresh_str}
\end{align}
 The resulting temporal shares are represented as $A_i^{Q_{TBS}}, i\in [n]$. The utility of the TBS is written as $U_{w^n}(\lambda^n)$. 
The space of all threshold based strategies is denoted by $\mathcal{Q}_{TBS}$.
\label{def:U_TBS}
\end{Definition}

It can be shown that any optimal scheduling strategy can be represented as a TBS. In other words, any optimal strategy is \textit{equivalent} to a TBS, where equivalence is defined in \cite{shahsavari2019general}.

\begin{Theorem}[\!\!\cite{shahsavari2019general}]
For the scheduling setup $(n,N_{max}, \underline{w}^n,$ $ \overline{w}^n, f_{R^m})$, assume that $\mathcal{Q}\neq \varnothing$. Then, there exists an optimal threshold based strategy $Q_{TBS}$.
\label{th:neq:normal}
\end{Theorem}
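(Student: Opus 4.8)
The plan is to recast utility maximization over temporally fair schedulers as a convex program and to identify the threshold vector $\lambda^n$ with the vector of Lagrange multipliers dual to the temporal-demand constraints \eqref{Def:tem_fair}. First I would reduce the problem to \emph{stationary randomized} strategies, i.e.\ measurable kernels $\pi\colon\mathbb{R}^m\to\mathcal{P}(\mathsf V)$ applied independently at each time-slot; for such a $\pi$ write $a_i(\pi)=\mathbb{E}\big[\sum_j\pi(\mathcal V_j\mid R^m)\mathbbm{1}_{\{u_i\in\mathcal V_j\}}\big]$ and $u(\pi)=\mathbb{E}\big[\sum_j\pi(\mathcal V_j\mid R^m)R_j\big]$. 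Given an arbitrary scheduler $Q$, let $\pi_k(\cdot\mid r^m)$ be the (randomized) choice distribution of $Q_k$ conditioned on the current performance vector being $r^m$, averaged over the past observations and internal randomness, and set $\bar\pi_t=\tfrac1t\sum_{k\le t}\pi_k$; the i.i.d.\ structure then gives $\mathbb{E}[A^Q_{i,t}]=a_i(\bar\pi_t)$ and $\mathbb{E}[U^Q_t]=u(\bar\pi_t)$. Extracting a convergent subsequence of $\bar\pi_t$ (compactness of the space of kernels in a suitable weak-$*$ topology, using integrability of $R^m$) and applying Fatou's lemma to the $\liminf/\limsup$ in \eqref{Def:temp_share} and to $U^Q_t$ yields a stationary kernel that is feasible with utility no smaller than that of $Q$; conversely, by the strong law of large numbers every feasible stationary kernel $\pi$ induces a temporally fair scheduler with almost-sure utility $u(\pi)$. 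Hence $U^*=\max\{u(\pi):\pi\text{ stationary},\ \underline w_i\le a_i(\pi)\le\overline w_i\ \forall i\}$, with the maximum attained.

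Next, I would exploit convex duality on this reformulated problem. The achievable set $\mathcal A=\{(a^n(\pi),u(\pi)):\pi\text{ stationary}\}\subset\mathbb R^{n+1}$ is convex --- convex combinations of kernels are kernels, and $a_i$ and $u$ are linear in $\pi$ --- and compact by the argument above; since $\mathcal Q\neq\varnothing$, $\mathcal A$ meets the box $\prod_i[\underline w_i,\overline w_i]$ in its first $n$ coordinates, and the optimum is attained at some $(a^{n*},U^*)\in\mathcal A$. Because the maximized functional and the constraints are affine, a separating-hyperplane argument at $(a^{n*},U^*)$ supplies multipliers $\lambda_i\in\mathbb R$ with the complementary-slackness sign pattern of the two-sided bound ($\lambda_i\ge 0$ if the lower bound binds, $\lambda_i\le 0$ if the upper bound binds, $\lambda_i=0$ otherwise) such that $(a^{n*},U^*)\in\argmax_{(a^n,u)\in\mathcal A}\big(u+\sum_i\lambda_i a_i\big)$.

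Finally I would identify this dual optimizer as a TBS. The functional $u(\pi)+\sum_i\lambda_i a_i(\pi)=\mathbb{E}\big[\sum_j\pi(\mathcal V_j\mid R^m)\big(R_j+\sum_i\lambda_i\mathbbm{1}_{\{u_i\in\mathcal V_j\}}\big)\big]$ is linear in $\pi$ and decouples across realizations, so it is maximized by the kernel that, for a.e.\ $r^m$, concentrates on $\argmax_{\mathcal V_j\in\mathsf V}\big(r_j+\sum_i\lambda_i\mathbbm{1}_{\{u_i\in\mathcal V_j\}}\big)$ --- precisely the rule \eqref{Eq:thresh_str} defining $Q_{TBS}(\lambda^n)$. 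Since $f_{R^m}$ is a joint density and $\mathsf V$ is finite, for this fixed $\lambda^n$ the set of $r^m$ at which two or more virtual users attain the maximum lies in a finite union of hyperplanes and hence is null, so the maximizing kernel is almost surely unique; this forces $\big(a^n(Q_{TBS}(\lambda^n)),U_{w^n}(\lambda^n)\big)=(a^{n*},U^*)$, and therefore $Q_{TBS}(\lambda^n)$ is temporally fair (its shares equal $a^{n*}$, which lies in the box) and attains the optimal utility $U^*$.

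The step I expect to be the main obstacle is the reduction to stationary schedulers: choosing a topology on the space of kernels that simultaneously makes $\mathcal A$ compact and $a_i,u$ continuous, handling the $\liminf/\limsup$ in the constraints and the $\liminf$ in the utility through (reverse) Fatou, and ruling out that some history-dependent scheduler strictly outperforms every stationary one. Once the problem has been placed on the compact convex body $\mathcal A$, the duality step and the identification of the dual optimizer with a TBS --- which rests only on $f_{R^m}$ giving zero mass to ties --- are comparatively routine.
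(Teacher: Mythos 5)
First, note that this paper does not actually prove Theorem~\ref{th:neq:normal}: it is imported verbatim from \cite{shahsavari2019general}, and the only internal trace of its proof is the characterization \eqref{eq:thresh} of the optimal thresholds. Your route --- reduce to stationary randomized kernels, pass to the convex compact achievable set $\mathcal A\subset\mathbb R^{n+1}$, extract Lagrange multipliers by separation, and identify the Lagrangian maximizer with the TBS rule \eqref{Eq:thresh_str} using the fact that ties have probability zero under a joint density --- is the standard duality argument for this class of results and is in the same spirit as the cited work, so the overall architecture is sound. The reduction step you flag as the main obstacle is indeed delicate but fixable along the lines you sketch, provided you add the (implicit) integrability of $R^m$, which you silently use for weak-$*$ continuity of $u(\pi)$, for the reverse Fatou step, and for the SLLN in the converse direction.

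The genuine gap is in the duality step, not the reduction. Separation of $\mathcal A$ from $\{(a^n,u): a^n\in\prod_i[\underline w_i,\overline w_i],\, u>U^*\}$ gives a nonzero normal $(\mu^n,\mu_0)$ with $\mu_0\ge 0$, but your argument needs $\mu_0>0$ to normalize and read off finite thresholds $\lambda^n$; the constraints being affine does not buy this, because the abstract constraint $(a^n,u)\in\mathcal A$ plays the role of the objective's domain and a Slater-type relative-interior condition can fail. It fails exactly when the demand box meets the projection of $\mathcal A$ only on its boundary --- e.g.\ equality demands sitting at an extreme point of the achievable share region. This is not a vacuous technicality: take $n=2$, $N_{max}=1$, equality demands $w^n=(1,0)$, and a performance vector for which $R_1-R_2$ has full support. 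The constant scheduler ``always activate $u_1$'' shows $\mathcal Q\neq\varnothing$, yet every finite threshold vector activates $u_2$ with positive probability in each slot, so no TBS of the form \eqref{Eq:thresh_str} with $\lambda^n\in\mathbb R^n$ meets the demands; an optimal ``TBS'' exists only if one admits extended-real thresholds or imposes support/boundedness conditions of the kind used in \cite{shahsavari2019general}. Your proof must therefore either treat the degenerate ($\mu_0=0$) case separately --- e.g.\ by a limiting argument in the thresholds or by restricting to demand vectors in the relative interior of the achievable share region --- or import the corresponding regularity hypotheses; as written, the claim ``a separating-hyperplane argument supplies multipliers with the stated sign pattern'' overstates what separation alone delivers. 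The final identification step (a.s.\ uniqueness of the per-slot maximizer forcing the TBS shares to equal $a^{n*}$) is fine once finite multipliers are in hand.
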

The theorem proves the existence of optimal TBSs. However, the question of how to construct such TBSs is not addressed. An iterative algorithm based on the Robins-Monro method was proposed in \cite{shahsavari2019general} to construct the optimal thresholds using the BS's observations of the channel realizations at each time-slot. It was shown that the output of the algorithm converges to the optimal thresholds under mild assumptions on the channel statistics. In this paper, we are interested in providing upper and lower bounds on the RoC of the thresholds and the effect on average system utility. 

\subsection{Rates of Convergence}
The optimal threshold vector ${\lambda^*}^n$ is a function of the channel statistics. More precisely, it can be shown that for the scheduling problem with equality constraints $(n,N_{max}, {w}^n, {w}^n, f_{R^m})$, the optimal threshold vector ${\lambda^*}^n$ is the unique vector for which the following fairness constraints hold:
\begin{align}
  &  P\left(\max_{\mathcal{V}_j:u_i\in \mathcal{V}_j}R_{j,t}+\sum_{i=1}^n \lambda_i \mathbbm{1}_{\{u_i\in\mathcal{V}_{j}\}}\geq \max_{\mathcal{V}_j:u_i\notin \mathcal{V}_j}R_{j,t}+\sum_{i=1}^n \lambda_i \mathbbm{1}_{\{u_i\in\mathcal{V}_{j}\}}\right)\nonumber
    \\&\qquad\qquad\qquad\qquad\qquad\qquad\qquad\qquad =w_i, i\in [n],
    \label{eq:thresh}
\end{align}
where the probability is taken with respect to $f_{R^m}$. 

If the BS has access to the distribution $f_{R^m}$, it may solve Equation \eqref{eq:thresh} to derive the optimal threshold vector. However, in practice, the BS does not have access to the statistics of the performance vector ${R^m}$. Rather, at time $t$, it estimates the optimal threshold vector using the realizations  ${R}^{m\times t}=r^{m\times t}$. Let $\widehat{\lambda}^n_t$ be the estimate of the optimal threshold vector at time $t$. In this work, we are interested in deriving upper and lower bound on the RoC of the sequence of vectors $\widehat{\lambda}^n_t$ to ${\lambda^*}^n$, where, we consider the RoC under equality temporal fairness constraints. The analysis can be extended to the case of inequality constraints in a straightforward fashion.
\begin{Definition}[\textbf{Threshold RoC}]
\label{def:th_RoC}
Consider the set of scheduling setups with equality constraints $(n,N_{max}, {w}^n, {w}^n, f_{R^m}), f_{R^m}\in \mathcal{P}$, where $\mathcal{P}$ is a set of PDFs. Define the space of all mapping from $\mathbb{R}^{m\times t}$ to $\mathbb{R}^n$ as $\mathcal{G}_t=\{g:\mathbb{R}^{m\times t}\to \mathbb{R}^n\}$. The optimal threshold RoC is defined as
\begin{align*}
    \alpha^*= \sup_{(g_t)_{t\in \mathbb{N}}\in 
    \prod_{t\in \mathbb{N}}\mathcal{G}_t}\inf_{f_{R^m}\in \mathcal{P}} \sup_{\alpha\geq 0}\bigg\{\alpha: \lim_{t \to \infty} 
    \frac{\mathbb{E}_{R^{m\times t}}(||{\lambda^*}^n-\widehat{\lambda}^n_t||^2_2)}{t^{-2\alpha}}<\infty
    \bigg\},
\end{align*}
 where $\widehat{\lambda}^n_t\triangleq g_t(R^{m\times t})$ and $||\cdot||_2$ is the $\ell_2$ norm. 
\end{Definition}
The infimum in the above definition is taken to derive the worst-case RoC over all possible channel statistics.
The long-term-fair utility RoC (LTU-RoC) is defined as:

\begin{Definition}[\textbf{LTU-RoC}]
Consider the set of scheduling setups with equality constraints $(n,N_{max}, {w}^n, {w}^n, f_{R^m}), f_{R^m}\in \mathcal{P}$, where $\mathcal{P}$ is a set of PDFs. Define the space of all mappings from $\mathbb{R}^{m\times t}$ to $[m]$ as $\mathcal{Q}_t=\{Q_t:\mathbb{R}^{m\times t}\to [m]\}$. The optimal LTU-RoC is defined as
\begin{align*}
    \zeta^*=& \sup_{(Q_t)_{t\in\mathbb{N}}\subset \prod_{t\in \mathbb{N}}\mathcal{Q}_t} \inf_{f_{R^m}\in \mathcal{P}} \\&\qquad\sup_{\zeta\geq 0}\{\zeta: \lim_{t \to \infty} 
    \frac{\mathbb{E}_{R^{m\times t}}(|{U^*}-\frac{1}{t}\sum_{i\in [t]}\widehat{U}_i|_+)}{t^{-\zeta}}<\infty\},
\end{align*}
where $\widehat{U}_t$ is the utility of $Q_t$ and $|x|_+=x\cdot \mathbbm{1}_{x>0}$. 
\end{Definition}
Note that in the above definition $Q_t$ need not be a TBS. 

\section{Bounds on the Rates of Convergence}
\label{sec:thresh}
We show that the threshold RoC is at most $\frac{1}{2}$, so that the threshold vector is at least\footnote{We write $f(x)=\omega(g(x))$ if $\lim_{n\to \infty}\frac{g(x)}{f(x)}< \infty$.} $\omega(\frac{1}{\sqrt{t}})$ away from the optimal value, where $t$ is the number of channel realizations.  In the case of LTU-RoC, we show that $\zeta^*$ can be arbitrarily large. To elaborate, we show that there are temporal fair schedulers for which $\frac{1}{t}\sum_{i\in [t]}\widehat{U}_i$, the average utility up to time $t$, may be larger than ${U^*}$ for all $t\in \mathbb{N}$  and approaches it from above.


\subsection{Threshold Rate of Convergence}
As a first step, we investigate the RoC of the best estimate of the optimal threshold vector, when the BS has access to $t\in \mathbb{N}$ prior observations of the users' channel realizations. We derive an upper-bound on the RoC assuming the observations of channel realizations are noiseless (i.e. perfect channel state information). It is straightforward to argue that the upper-bound holds for the case of noisy observations as well. The arguments presented in this section build upon the following extension of the Cram\'{e}r-Rao bound for parameter estimation. 
\begin{Lemma}[\!\!\cite{miller1978modified}]
Let $X^{m\times t}$ be a matrix of random variables with distribution $f_{X^{m\times t}}(X^{m\times t}; \theta, \alpha)$, where $\alpha$ and $\theta$ are deterministic model parameters and $f_{X^{m\times t}}(X^{m\times t}; \theta, \alpha)$ is twice differentiable with respect to $\alpha$ for any fixed $\theta$. Then, the variance of any unbiased estimator $\widehat{\alpha}(X^{m\times t})$ of $\alpha$ is bounded as follows:
\begin{align}
    \sigma^{2}_{\widehat{\alpha}}\geq \mathbb{E}_{\theta,X^{n\times t}} ^{-1}\left(\frac{\partial^2}{\partial \alpha^2} \frac{1}{ln\left(f_{X^{n\times t}}(X^{n\times t}; \theta,\alpha)\right)}\right).
    \label{Eq:CR}
\end{align}
\label{lem:CR}
\end{Lemma}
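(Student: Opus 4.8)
The plan is to mimic the classical derivation of the Cram\'{e}r--Rao inequality, carrying $\theta$ along as a spectator parameter and invoking convexity at the end to pass to the expected‑information form of the bound. Fixing $\theta$, the hypothesis that $\widehat{\alpha}$ is unbiased reads $\int \widehat{\alpha}(x)\, f_{X^{m\times t}}(x;\theta,\alpha)\,dx=\alpha$. Differentiating both sides in $\alpha$ and exchanging $\partial_\alpha$ with the integral --- legitimate since $f_{X^{m\times t}}$ is twice differentiable in $\alpha$ and, under the usual domination hypotheses, the support does not depend on $\alpha$ --- gives $\int \widehat{\alpha}(x)\,\partial_\alpha f_{X^{m\times t}}(x;\theta,\alpha)\,dx=1$. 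Writing $\partial_\alpha f=f\,\partial_\alpha\ln f$ and observing that $\int \partial_\alpha f\,dx=\partial_\alpha\!\int f\,dx=0$, so that the score $S\triangleq\partial_\alpha\ln f_{X^{m\times t}}(X^{m\times t};\theta,\alpha)$ is conditionally zero‑mean, I would arrive at the covariance identity $\mathbb{E}\!\left[(\widehat{\alpha}-\alpha)\,S\mid\theta\right]=1$.

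Applying the Cauchy--Schwarz inequality conditionally on $\theta$ then yields $1\le \mathbb{E}[(\widehat{\alpha}-\alpha)^2\mid\theta]\cdot \mathbb{E}[S^2\mid\theta]$, i.e.\ the conditional mean‑squared error of $\widehat{\alpha}$ is at least $1/I_t(\theta)$, where $I_t(\theta)\triangleq\mathbb{E}[S^2\mid\theta]$ is the Fisher information of the $t$‑sample model. A second differentiation of $\int f\,dx=1$, again exchanging $\partial_\alpha^2$ with the integral, produces the information identity $I_t(\theta)=-\mathbb{E}[\partial_\alpha^2\ln f_{X^{m\times t}}(X^{m\times t};\theta,\alpha)\mid\theta]$, which recasts the bound in the log‑likelihood‑curvature form of \eqref{Eq:CR}. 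Finally, averaging over $\theta$ and using convexity of $x\mapsto 1/x$ gives $\sigma_{\widehat{\alpha}}^2=\mathbb{E}_\theta\!\left[\mathbb{E}[(\widehat{\alpha}-\alpha)^2\mid\theta]\right]\ge \mathbb{E}_\theta[1/I_t(\theta)]\ge 1/\mathbb{E}_\theta[I_t(\theta)]$, which is the asserted modified bound with the expectation taken jointly over $\theta$ and $X^{m\times t}$ (when $\theta$ is genuinely fixed, this last step is vacuous and one recovers the pointwise Cram\'{e}r--Rao inequality).

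The step I expect to be most delicate is justifying the two interchanges of differentiation and integration: this is precisely where twice‑differentiability, together with the (implicit) integrable‑domination conditions and $\alpha$‑independence of the support, is used, and it is the point at which pathological densities would break the argument. A secondary subtlety is the bookkeeping around the nuisance parameter --- one must be sure unbiasedness is imposed for \emph{every} value of $\theta$ (equivalently, that the conditional bias vanishes), since only then do the conditional covariance identity and the concluding Jensen step go through.
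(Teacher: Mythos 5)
Your argument is correct, but there is no in-paper proof to compare it against: the paper imports this lemma verbatim from the cited reference \cite{miller1978modified} (the Miller--Chang modified Cram\'{e}r--Rao bound) and never proves it. What you have written is essentially the standard derivation of that cited result: conditional unbiasedness plus the first interchange of $\partial_\alpha$ with the integral gives the covariance identity $\mathbb{E}[(\widehat{\alpha}-\alpha)S\mid\theta]=1$ for the score $S=\partial_\alpha\ln f$, Cauchy--Schwarz gives the conditional bound $\mathbb{E}[(\widehat{\alpha}-\alpha)^2\mid\theta]\ge 1/I_t(\theta)$, the second interchange yields the information identity $I_t(\theta)=-\mathbb{E}[\partial_\alpha^2\ln f\mid\theta]$, and Jensen's inequality relaxes the tighter Miller--Chang form $\mathbb{E}_\theta[1/I_t(\theta)]$ to $1/\mathbb{E}_{\theta,X}[-\partial_\alpha^2\ln f]$, which is the quantity the lemma's display intends --- the printed ``$\tfrac{1}{\ln f}$'' and the missing minus sign are typographical garblings of exactly what you derive. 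You also correctly identify the two hypotheses that the statement leaves implicit: the domination/support-independence conditions needed for both differentiations under the integral (twice-differentiability alone does not suffice), and the requirement that unbiasedness hold for every value of the nuisance parameter $\theta$ (conditional unbiasedness), which is the reading consistent with the genie-assisted setup in which the lemma is later applied; your observation that the final Jensen step is vacuous when $\theta$ is genuinely fixed also resolves the mild inconsistency in the lemma between declaring $\theta$ deterministic and averaging over it.
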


The right hand side of Equation \eqref{Eq:CR} is called the Fisher information of the variable $\alpha$.  

The optimal threshold vector ${\lambda^*}^n$ is a function of the PDF of the performance vector $R^m$. As in Example \ref{Ex:Ray}, the PDF of the performance vector is parameterized by a set of model parameters such as the users' locations. Generally, we assume that the PDF is parametrized by the real-valued vector, $(\theta_1,\theta_2,\cdots,\theta_n)$, so that the statistics of $R^m$ can be expressed as $f_{R^m}(R^m; \theta_1,\theta_2,\cdots, \theta_n)$. In order to derive an upper-bound on the threshold RoC, we consider a genie-assisted BS which is given the values of the model parameters $\theta_i, i\neq k$ and optimal thresholds $\lambda^*_i, i\neq k$, for some $k\in [n]$. Consequently, the genie-assisted BS does not know the model parameter $\theta_{k}$ and the optimal threshold $\lambda_{k}^*$.
Furthermore, we assume that the BS can accurately calculate $f_{R^m}(R^m; \theta_1,\theta_2,\cdots, \theta_n)$ and hence ${\lambda^*}^n$ provided that it has access to $\theta_1,\theta_2,\cdots, \theta_n$. Under these assumptions, the problem of finding the optimal threshold vector is related to the well-studied quantile estimation problem \cite{takeuchi2006nonparametric}. To elaborate, note that from Equation \eqref{eq:thresh} finding the optimal $\lambda^*_{k}$ requires solving the following equation:
\begin{align}
     \!\!\!w_{k}\!\!=\!P\left(\!\!\max_{\mathcal{V}_j:u_{k}\in \mathcal{V}_j}
         (R_{j,t}+\!\!\!\!\!\!\!\sum_{u_i\in \mathcal{V}_j- \{u_{k}\}}\!\!\!\!\!\!\!\lambda^*_{i})\!
         - \!\!
         \max_{\mathcal{V}_j:u_{k}\notin \mathcal{V}_j} (R_{j,t}+\!\!\sum_{u_i\in \mathcal{V}_j}\!\!\lambda^*_{i})
         )\geq\! -\lambda^*_{k}\right).
         \label{eq:t1}
\end{align}
Define $\widetilde{R}\triangleq \max_{\mathcal{V}_j:u_{k}\in \mathcal{V}_j}
         (R_{j,t}+\sum_{u_i\in \mathcal{V}_j- \{u_{k}\}}\lambda^*_{i})
         - 
         \max_{\mathcal{V}_j:u_{k}\notin \mathcal{V}_j} (R_{j,t}+\sum_{u_i\in \mathcal{V}_j}\lambda^*_{i})
         )$ 
         and let $f_{\widetilde{R}}(\widetilde{R}; \theta_1,\theta_2,\cdots,\theta_n)$ be the underlying probability measure. Equation \eqref{eq:t1} can be written in the form of the following integral equation which describes a quantile estimation problem:
         \begin{align}
            w_{k}=\int_{-\lambda^*_{k}}^{\infty}
            f_{\widetilde{R}}(r; \theta_1,\theta_2,\cdots,\theta_n)dr.
         \end{align}
    Consequently, $\lambda^*_{k}$ is a function of $\theta_{k}$ given $\theta_i, i\neq k$. We assume that $\lambda^*_{k}(\theta_1)$ is a smooth, differentiable function. 
    For a given $\theta_{k}=\theta$, assume that $\frac{\partial}{\partial \theta_{k}}\lambda^*_{k}|_{\theta_{k}=\theta}>0$, without loss of generality. Then, it is well-known that there exists a local neighborhood $\mathcal{B}=[\theta-\epsilon, \theta+\epsilon], \epsilon>0$, for which $\lambda^*_{k}(\theta_{k}), \theta_{k}\in \mathcal{B}$ is increasing in $\theta_1$. Particularly, $\lambda^*_{k}$ is one-to-one as a function of $\theta_{k}$ over the interval $\mathcal{B}$. As a result, the PDF of $\widetilde{R}$ can be parametrized by $\theta_1,\theta_2,\cdots,\theta_{k-1},\lambda_{k}^*, \theta_{k+1},\cdots ,\theta_n$ in this neighborhood. We assume that $f_{\tilde{R}}(\widetilde{R};\theta_1,\theta_2,\cdots,\theta_{k-1},\lambda_{k}^*, \theta_{k+1},\cdots ,\theta_n)$ is twice differentiable in $\lambda_{k}^*$. Note that this property can be verified in conventional models of cellular communication systems such as the one described in Example \ref{Ex:Ray}. 
    
    The BS needs to estimate the parameter $\lambda^*_{k}$ using the observations $R^{m\times t}$. This resembles the problem described in Lemma \ref{lem:CR}. Using the Cram\'er-Rao bound for parameter estimation, we prove the following theorem which provides upper-bounds on the threshold RoC when unbiased estimators are used. 

\begin{Theorem}
\label{th:th_RoC}
Let $\sigma^*_{k}$ denote the minimum mean square error (MMSE) in estimating $\lambda^*_{k}$ for the genie-assisted setup described above. The following holds $
    \sigma^{2}_{k}\geq I_{\widetilde{R}}(\lambda^*_{k})$,
where 
\begin{align*}
I_{\widetilde{R}}(\lambda^*_{k})= \frac{1}{t}\mathbb{E}_{\widetilde{R}} ^{-1}\left(\frac{\partial^2}{\partial {\lambda^*_{k}}^2} {ln\left(f_{\tilde{R}}(\widetilde{R};\underline{\theta}\right)}\right),
\end{align*}
where $\underline{\theta}=(\theta_1,\theta_2,\cdots,\theta_{k-1},\lambda_{k}^*, \theta_{k+1},\cdots ,\theta_n)$. Particularly, we have $\alpha^*\leq \frac{1}{2}$, where $\alpha^*$ is the threshold RoC.
\end{Theorem}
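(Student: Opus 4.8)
The plan is to obtain $\sigma_k^2 \ge I_{\widetilde R}(\lambda_k^*)$ as a direct instance of the Cram\'er--Rao bound of Lemma~\ref{lem:CR}, and then to read off $\alpha^*\le \frac12$ from the fact that $I_{\widetilde R}(\lambda_k^*)$ decays like $1/t$. First I would make the estimation problem one-dimensional. Fix a value $\theta_k=\theta$ at which $\frac{\partial}{\partial\theta_k}\lambda_k^*>0$ and consider the one-parameter sub-family of setups obtained by freezing $\theta_i$, $i\ne k$, at their genie-supplied values and letting $\theta_k$ range over the neighborhood $\mathcal{B}=[\theta-\epsilon,\theta+\epsilon]$; by hypothesis this family lies in $\mathcal{P}$. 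Since $\theta_k\mapsto\lambda_k^*$ is a $C^1$ bijection on $\mathcal{B}$, the sub-family is equivalently parametrized by $\lambda_k^*$, and --- because $\lambda_i^*$, $i\ne k$, are known to the genie --- the scalar $\widetilde R$ defined above is a computable function of each observed column of $R^{m\times t}$, whose density $f_{\widetilde R}(\cdot;\underline{\theta})$ is twice differentiable in $\lambda_k^*$ by the standing regularity assumptions. Since providing the remaining parameters can only decrease the estimation error, the MMSE of estimating ${\lambda^*}^n$ from $R^{m\times t}$ (for any PDF in the sub-family) is at least $\sigma_k^2$, the MMSE of estimating the scalar $\lambda_k^*$ in this reparametrized genie-assisted problem.

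Next I would apply Lemma~\ref{lem:CR} with $\alpha=\lambda_k^*$ and $\theta=(\theta_i)_{i\ne k}$. The derived sample $\widetilde R_1,\dots,\widetilde R_t$ is i.i.d. with a product density in $\lambda_k^*$, so the Fisher information tensorizes, $J_t(\lambda_k^*)=t\,J_1(\lambda_k^*)$ with $J_1(\lambda_k^*)=\mathbb{E}_{\widetilde R}\big[\big(\frac{\partial}{\partial\lambda_k^*}\ln f_{\widetilde R}(\widetilde R;\underline{\theta})\big)^2\big]$; the regularity hypotheses (twice differentiability, together with $\frac{\partial}{\partial\theta_k}\lambda_k^*\ne 0$ and absolute continuity of $\widetilde R$) guarantee $0<J_1(\lambda_k^*)<\infty$. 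Lemma~\ref{lem:CR} then gives, for every unbiased estimator of $\lambda_k^*$, $\sigma_k^2\ge 1/(t\,J_1(\lambda_k^*))=I_{\widetilde R}(\lambda_k^*)$, which is the first assertion.

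To conclude, note that $\|{\lambda^*}^n-\widehat{\lambda}^n_t\|_2^2\ge(\lambda_k^*-\widehat{\lambda}_{k,t})^2$, where $\widehat{\lambda}_{k,t}$ is the $k$-th coordinate of $\widehat{\lambda}^n_t$, so on the sub-family above $\mathbb{E}_{R^{m\times t}}\big(\|{\lambda^*}^n-\widehat{\lambda}^n_t\|_2^2\big)\ge\sigma_k^2\ge c/t$ with $c=1/J_1(\lambda_k^*)>0$. Hence $\mathbb{E}(\|\cdot\|_2^2)/t^{-2\alpha}\ge c\,t^{2\alpha-1}\to\infty$ whenever $\alpha>\frac12$, so the supremum over admissible $\alpha$ in Definition~\ref{def:th_RoC} is at most $\frac12$ for every PDF in this sub-family. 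Since the infimum over $f_{R^m}\in\mathcal{P}$ sits inside the supremum over estimator sequences in the definition of $\alpha^*$, it is enough that each (unbiased) estimator sequence admits such a worst-case PDF, which yields $\alpha^*\le\frac12$.

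The step I expect to be the main obstacle is the reduction carried out in the first two paragraphs: exhibiting a genuinely non-degenerate one-parameter sub-family inside $\mathcal{P}$, so that $J_1(\lambda_k^*)$ is strictly positive and finite --- this is exactly where the smoothness and strict monotonicity of $\lambda_k^*(\theta_k)$ and the absolute continuity of $\widetilde R$ are used --- and accommodating the fact that Lemma~\ref{lem:CR} is stated only for unbiased estimators. For the latter one can either phrase the claim for unbiased threshold estimators, as the surrounding text does, or replace Lemma~\ref{lem:CR} by its Bayesian (van Trees) form with a smooth prior supported on $\mathcal{B}$, which gives the same $\Theta(1/t)$ lower bound on the Bayes risk and hence on the worst-case mean-squared error, at the cost of only routine additional regularity.
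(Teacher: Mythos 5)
Your proposal follows essentially the same route as the paper's proof: apply the Cram\'er--Rao bound of Lemma~\ref{lem:CR} to the genie-assisted scalar problem in $\lambda_k^*$, use independence and identical distribution of the per-slot observations to tensorize the Fisher information into $t$ times the single-sample term, conclude $\mathbb{E}\big(\|{\lambda^*}^n-\widehat{\lambda}^n_t\|_2^2\big)\geq c/t$, and read off $\alpha^*\leq\frac12$ from Definition~\ref{def:th_RoC}. Your added care about the coordinate reduction, the non-degenerate one-parameter sub-family, and the unbiasedness restriction (with the van Trees alternative) only makes explicit caveats the paper itself acknowledges in the surrounding discussion, so the argument is correct and matches the paper's.
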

\begin{proof}
Appendix \ref{Ap:th:th_Roc}.
\end{proof}
      \begin{figure*}[t]
    \centering
        \begin{subfigure}[b]{0.31\textwidth}
            \includegraphics[width=\textwidth]{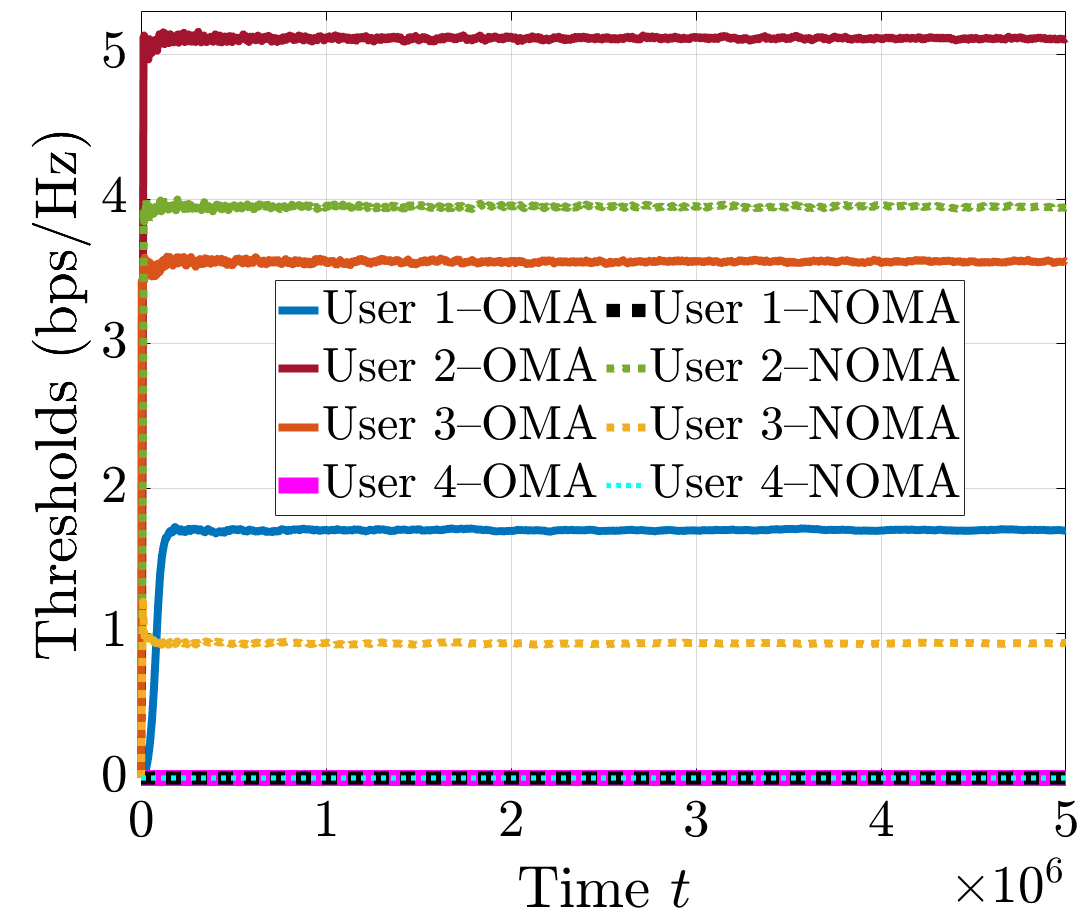}
            \caption{Thresholds}%
            {{\small }}    
            \label{fig:thresholds}
        \end{subfigure}
        \begin{subfigure}[b]{0.31\textwidth}              \includegraphics[width=\textwidth]{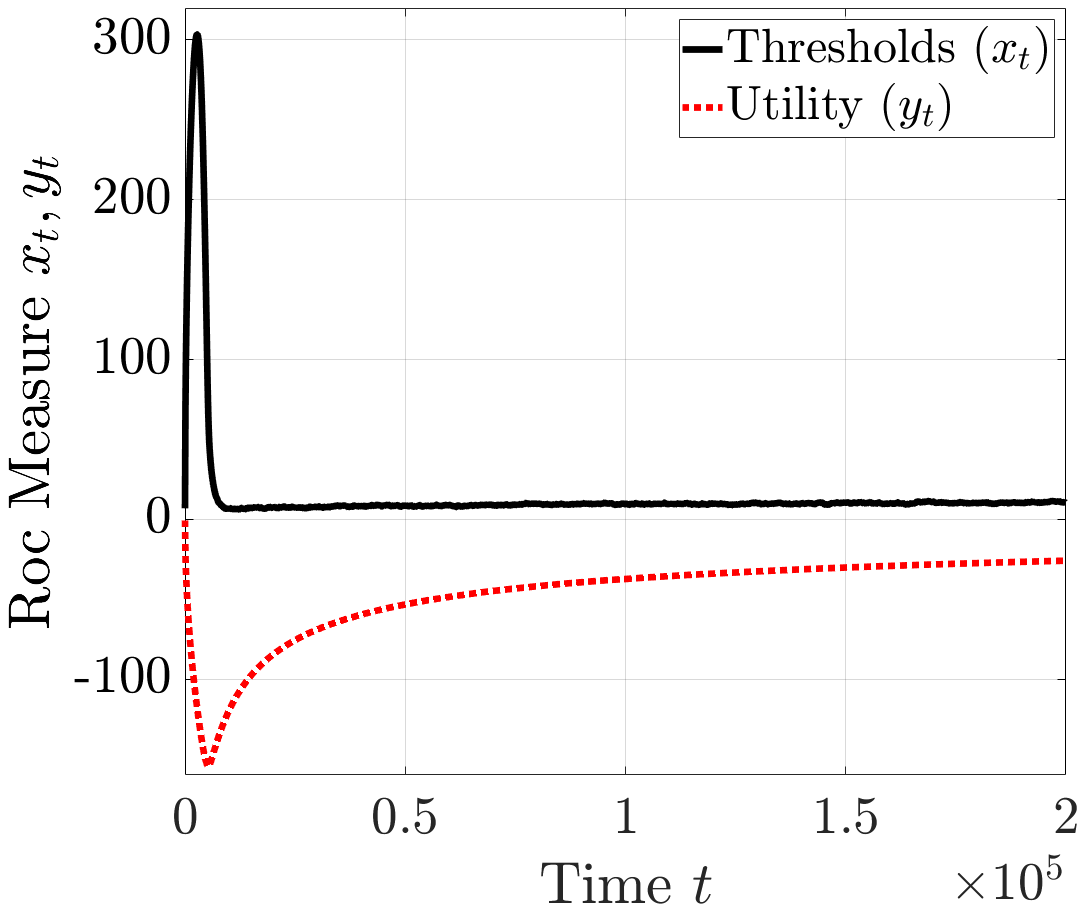}
            \caption{RoC in OMA}%
            {{\small}}    
            \label{fig:roc-oma}
        \end{subfigure}
        \begin{subfigure}[b]{0.31\textwidth}   
            \centering 
            \includegraphics[width=\textwidth]{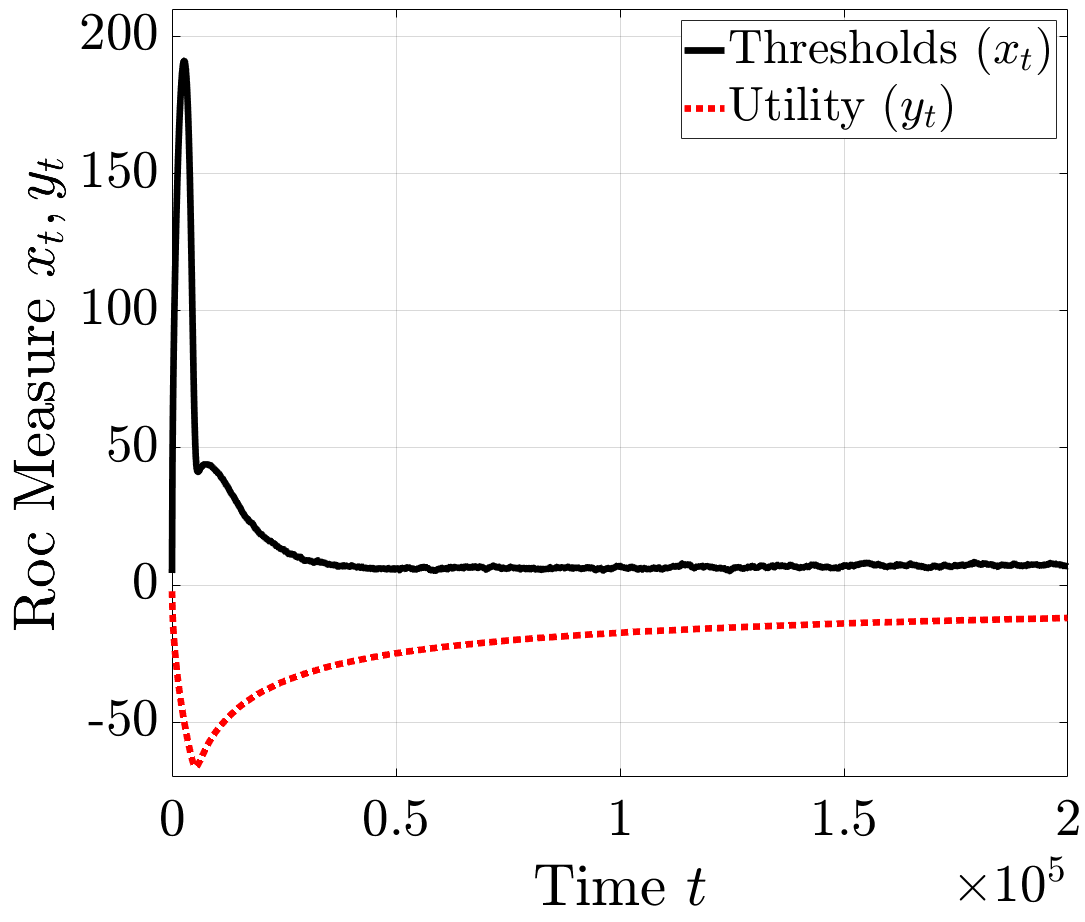}
            \caption{RoC in NOMA}%
            {{\small}}
            \label{fig:roc-noma}
        \end{subfigure}

        \caption{ (a) The evolution of the user thresholds in time for OMA and NOMA settings. RoC measures $x_t$ (thresholds) and $y_t$ (utility) as a function of time for the thresholds and system utility in the (b) OMA setting and (c) NOMA setting. 
        }
        \label{fig:fig}
\end{figure*}

Theorem \ref{th:th_RoC} considers the threshold RoC when unbiased estimators are used to estimate $\lambda^*_{k}$. Similar bounds can potentially be derived when biased estimators are employed
using an alternative median estimation argument based on the order statistics of the vectors $R^{m}(1),R^{m}(2),\cdots,R^{m}(t)$
as in \cite{hojo1931distribution}.

\subsection{Utility Rate of Convergence}
In the following, we show that the LTU-RoC is arbitrarily large. In fact,  it is shown that there exists a scheduler construction strategy 
for which
the utility of the scheduler approaches the long-term optimal utility from above as the scheduling window increases asymptotically. 

         \begin{Theorem}
         \label{th:zeta}
Consider the set of scheduling setups  $(n,N_{max}, \underline{w}^n, \overline{w}^n, f_{R^m}), f_{R^m} \in \mathcal{P}$, where $\mathcal{P}$ is a compact set of bounded and differentiable PDFs on $\mathcal{R}^m$. Assume that $\alpha^*>0$ for this set of scheduling setups. There exists a scheduling strategy $Q^*=(Q^*_t)_{t\in \mathbb{N}}$ for which $|{U^*}-\frac{1}{t}\sum_{i\in [t]}\widehat{U}_i|_+=0$ for all large enough $t\in \mathbb{N}$. Particularly, we have
$\zeta^*=\infty$.
\end{Theorem}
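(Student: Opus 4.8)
\emph{Plan.} The first thing I would isolate is a strict gap. Since $\alpha^{*}>0$, the optimal threshold vector ${\lambda^*}^n$ genuinely depends on the unknown statistics; in particular ${\lambda^*}^n\neq\mathbf{0}$, for otherwise \eqref{eq:thresh} would say the greedy rule $Q_{TBS}(\mathbf{0})$ already meets the temporal demands with equality, so the thresholds would be known without any observation and the problem underlying Definition~\ref{def:th_RoC} would be degenerate. Hence $Q_{TBS}(\mathbf{0})$ is infeasible for the temporally fair problem and, being the unconstrained utility maximizer, attains $U^{\circ}:=\mathbb{E}[\max_{\mathcal{V}_{j}\in\mathsf{V}}R_{j,1}]>U^{*}$; set $\gamma:=U^{\circ}-U^{*}>0$. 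Now the quickest route is to take $Q^{*}:=Q_{TBS}(\mathbf{0})$ for all $t$, which is legitimate because the $Q_t$ in the definition of $\zeta^{*}$ are unconstrained maps $\mathbb{R}^{m\times t}\to[m]$ and need not be temporally fair: then $\frac1t\sum_{k\le t}\widehat{U}_k=\frac1t\sum_{k\le t}\max_{\mathcal{V}_j}R_{j,k}\to U^{\circ}>U^{*}$ a.s.\ by the strong law, so $|U^{*}-\frac1t\sum_{i\le t}\widehat{U}_i|_{+}=0$ for all large $t$ almost surely, and a large‑deviation bound (using that $\max_{\mathcal V_j}R_{j,1}$ has light tails under the regularity of $\mathcal P$) gives $\mathbb{E}(|U^{*}-\frac1t\sum_{i\le t}\widehat{U}_i|_{+})\le\sqrt{\mathrm{Var}(\max_{\mathcal V_j}R_{j,1})/t}\cdot e^{-\Theta(t)}$, which decays faster than every polynomial, whence $\zeta^{*}=\infty$.

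This $Q^{*}$ violates the demands forever, so to also honor the narrative preceding the theorem — violate fairness for a long initial period, then be temporally fair with the utility pulled to $U^{*}$ from above — I would instead use two phases. Fix a large $T$: for $k\le T$ let $Q^{*}_k=Q_{TBS,k}(\mathbf{0})$; for $k>T$ let $Q^{*}_k=Q_{TBS,k}(\widehat{\lambda}^n_k)$ with $\widehat{\lambda}^n_k:=(1-\epsilon_k)\widetilde{\lambda}^n_k$, where $\widetilde{\lambda}^n_k$ is a consistent estimator of ${\lambda^*}^n$ from $R^{m\times(k-1)}$ achieving the threshold RoC ($\mathbb{E}\|\widetilde{\lambda}^n_k-{\lambda^*}^n\|_2^2=O(k^{-2\alpha})$, bias $o(\epsilon_k)$, some $\alpha>0$) and $\epsilon_k=k^{-\beta}\downarrow0$ with $0<\beta<\min\{2\alpha,\tfrac12\}$. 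To see $Q^{*}$ is a valid scheduler: a finite prefix does not affect $\underline{A}^{Q^{*}}_i,\overline{A}^{Q^{*}}_i$, and for $k>T$ we have $\widehat{\lambda}^n_k\to{\lambda^*}^n$ a.s.; since the densities in $\mathcal P$ are bounded and differentiable the map from a threshold vector to the induced per‑slot activation probabilities is continuous, so these converge a.s.\ to the optimal shares $w^*_i$ of $Q_{TBS}({\lambda^*}^n)$, and a Ces\`aro argument plus a martingale strong law for $\sum_k(\mathbbm{1}_{\{u_i\in Q^{*}_k\}}-P(u_i\in Q^{*}_k\mid R^{m\times(k-1)}))$ gives $A^{Q^{*}}_{i,t}\to w^*_i\in[\underline{w}_i,\overline{w}_i]$ a.s., so \eqref{Def:tem_fair} holds.

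The crux is that the phase‑$2$ utility never drops below $U^{*}$ in expectation. Let $g(\lambda^n):=\mathbb{E}[\max_{\mathcal V_j}(R_{j,1}+\sum_i\lambda_i\mathbbm{1}_{\{u_i\in\mathcal V_j\}})]$, which is convex with $\nabla g(\lambda^n)$ the activation‑probability vector; the envelope identity gives $U_{w^n}(\lambda^n)=g(\lambda^n)-\langle\lambda^n,\nabla g(\lambda^n)\rangle$, so $\tfrac{d}{dc}U_{w^n}(c\,{\lambda^*}^n)=-c\,\langle{\lambda^*}^n,\nabla^2 g(c\,{\lambda^*}^n){\lambda^*}^n\rangle\le0$. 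Hence $U_{w^n}((1-\epsilon){\lambda^*}^n)\ge U_{w^n}({\lambda^*}^n)=U^{*}$ for $\epsilon\in[0,1]$, with a first‑order gain at least $c_1\epsilon$ for small $\epsilon$, where $c_1>0$ because the non‑degeneracy forces ${\lambda^*}^n\notin\ker\nabla^2 g({\lambda^*}^n)$. A first‑order Taylor expansion of $U_{w^n}$ about $(1-\epsilon_k){\lambda^*}^n$, using that the bias and standard deviation of $\widetilde{\lambda}^n_k$ are $o(\epsilon_k)$, then yields $\mathbb{E}[\widehat{U}_k]\ge U^{*}+\tfrac{c_1}{2}\epsilon_k$ for large $k$. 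Summing and using $\mathbb{E}[\widehat{U}_k]=U^{\circ}$ on phase $1$, $\mathbb{E}[\frac1t\sum_{k\le t}\widehat{U}_k]\ge U^{*}+\frac{T\gamma}{t}+\frac{c_1}{2t}\sum_{T<k\le t}k^{-\beta}\ge U^{*}+c_2 t^{-\beta}$ for large $t$. Finally, $M_t:=\sum_{k\le t}(\widehat{U}_k-\mathbb{E}[\widehat{U}_k\mid R^{m\times(k-1)}])$ is a martingale with bounded (sub‑exponential, in the models of interest) increments, so Azuma--Hoeffding gives $P(\frac1t M_t\le-\frac{c_2}{3}t^{-\beta})\le e^{-\Theta(t^{1-2\beta})}$, and the residual fluctuation of $\sum_k\mathbb{E}[\widehat{U}_k\mid R^{m\times(k-1)}]$ about its mean is controlled identically (a.s.-vanishing bounded terms inheriting the estimator's tails); hence $P(\frac1t\sum_{k\le t}\widehat{U}_k<U^{*})\le e^{-\Theta(t^{1-2\beta})}$, which is summable since $\beta<\tfrac12$, so Borel--Cantelli gives $|U^{*}-\frac1t\sum_{i\le t}\widehat{U}_i|_{+}=0$ for all large $t$ a.s., and the same bound gives $\mathbb{E}(|U^{*}-\frac1t\sum_{i\le t}\widehat{U}_i|_{+})=O(e^{-\Theta(t^{1-2\beta})})$, so $\zeta^{*}=\infty$, now with a temporally fair $Q^{*}$ whose utility converges to $U^{*}$ from above.

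\emph{Main obstacle.} For the minimalist version the only nontrivial point is extracting the strict gap $U^{\circ}>U^{*}$ (equivalently ${\lambda^*}^n\neq\mathbf{0}$) from $\alpha^{*}>0$: this is the non‑degeneracy content of the hypothesis, and one must argue carefully that a vanishing optimal threshold would trivialize the threshold‑learning problem. For the fair refinement — which is the substantive claim of the theorem — the delicate part is the simultaneous tuning of three rates in phase $2$: the perturbation $\epsilon_k$ must vanish so the empirical temporal shares still converge to $w^*_i$ and $Q^{*}$ stays fair; it must dominate the $O(k^{-\alpha})$ threshold‑estimation error so the deliberate first‑order surplus $\Theta(\epsilon_k)$ is neither swamped nor sign‑flipped by the Taylor remainder; and it must satisfy $\frac1t\sum_{k\le t}\epsilon_k=\omega(t^{-1/2})$ so the $\Theta(t^{-\beta})$ expected surplus decisively beats the $\Theta(t^{-1/2})$ statistical fluctuation of the running average utility. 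The assumption $\alpha^{*}>0$ is exactly what makes these requirements jointly satisfiable and, through the non‑degeneracy it entails, what supplies both the gap $U^{\circ}>U^{*}$ and the sensitivity $c_1>0$.
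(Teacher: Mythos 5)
Your route is genuinely different from the paper's (the paper alternates, forever, greedy blocks of length $M^k$ with learned-threshold blocks of length $M^{k(1+\alpha^*/4)}$, and shows the surplus $\epsilon M^k$ banked in each greedy block covers the $O(M^{-k\alpha^*/2})$ per-slot deficit of the following learned block), and your single-switch scheme with deliberately shrunken thresholds $(1-\epsilon_k)\widetilde{\lambda}^n_k$ is an interesting alternative, but two of its steps do not hold up. First, the inference ``$\alpha^*>0\Rightarrow{\lambda^*}^n\neq\mathbf{0}$'' is backwards: if the optimal thresholds were identically zero over $\mathcal{P}$, the constant estimator $g_t\equiv\mathbf{0}$ would have zero error and Definition~\ref{def:th_RoC} would give $\alpha^*=\infty$, which is still $>0$; so the hypothesis cannot rule out the degenerate case. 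The strict gap $\mathbb{E}(U(\mathbf{0}))>\mathbb{E}(U({\lambda^*}^n))$ is also an implicit assumption in the paper's own proof (its $\epsilon>0$), but it is an assumption, not something you have derived. Second, and more substantively, your temporally fair refinement hinges on the curvature condition $c_1>0$, i.e.\ ${\lambda^*}^n\notin\ker\nabla^2 g({\lambda^*}^n)$, to convert the shrinkage $\epsilon_k$ into a per-slot expected surplus $\Theta(\epsilon_k)$. Nothing in the theorem's hypotheses supplies this: $U_{w^n}(c\,{\lambda^*}^n)$ is only guaranteed nonincreasing in $c$, and it can be flat near $c=1$ even when $U(\mathbf{0})>U^*$ (all the curvature sitting near $c=0$). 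In that case your phase-two surplus is $o(\epsilon_k)$ or exactly zero, it no longer dominates the $\Theta(t^{-1/2})$ fluctuation of the running average, and $\frac1t\sum_{k\le t}\widehat{U}_k$ dips below $U^*$ infinitely often, so the claim ``$|U^*-\frac1t\sum_{i\in[t]}\widehat{U}_i|_+=0$ for all large $t$'' fails. The paper's recurring greedy blocks (whose time-fraction vanishes, preserving long-term fairness) avoid exactly this: they harvest the zeroth-order gap $\epsilon$ and need no sensitivity of the utility at ${\lambda^*}^n$.

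Smaller points: your ``minimalist'' greedy-forever scheduler is not a scheduling strategy in the sense of Definition~\ref{Def:Strategy} (it violates \eqref{Def:tem_fair} whenever ${\lambda^*}^n\neq\mathbf{0}$), so it cannot witness the theorem as stated, only the literal formula in the $\zeta^*$ definition; the exponent condition should be $\beta<\min\{\alpha,\tfrac12\}$ rather than $\beta<\min\{2\alpha,\tfrac12\}$, since the RMS threshold error is $O(k^{-\alpha})$ and must be $o(\epsilon_k)$ for your Taylor step; and, like the paper, you gloss over uniformity of all constants over $f_{R^m}\in\mathcal{P}$ (needed because of the infimum in the definition of $\zeta^*$; this is presumably what compactness of $\mathcal{P}$ is for) and the tail/integrability conditions behind the Azuma step. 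If you add the curvature assumption explicitly, or replace the single switch by recurring greedy phases of vanishing time-fraction as in the paper, your argument can be made to work.
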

\textit{Proof.} Appendix \ref{Ap:th:zeta}.
\section{Simulation Results}
\label{Sec:sim}


In this section, we provide simulation results to investigate the RoC of the Robbins Monro based iterative algorithm for finding the optimal thresholds of TBS, which was proposed in \cite[Algorithm 2]{shahsavari2019general}. This algorithm starts from zero thresholds for all users and updates the thresholds based on user temporal demands as well as the the outcome of the scheduling at each time-slot. We refer to this algorithm as the \textit{threshold learning algorithm} (TLA). We consider a time-slotted single small-cell scenario with a BS located in the center and four users distributed uniformly at random in a ring around the BS with inner and outer radii of 20 m and 100 m, respectively. We investigate two communication settings. In the first setting (OMA), only one user is scheduled at each time-slot, i.e. $N_{max}=1$. In the second setting (NOMA) one user or a pair of users are scheduled at each time-slot, i.e. $N_{max}=2$. 
The network utility is modeled by the truncated Shannon sum-rate as in \cite{shahsavari2019general} with maximum allowed spectral efficiency of $\gamma_{max}=6$ bps/Hz. At each time-slot, prior to scheduling, a max-min power optimization is performed for each virtual user as in \cite{liu2016fairness}. 
Maximum BS transmit power constraint is chosen such that the average SNR of $10$ dB is achievable when a single user is active on the boundary of the cell. In each setting, we use TLA to obtain an estimate of the optimal user thresholds. The total number of time-slots is set to $5\times 10^6$. According to \cite [Section V-A]{shahsavari2019general}, the step-size $s_t$ used to update the thresholds at time-slot $t$ should satisfy the following constraints: i) $s_t>0$, ii) $\lim_{t\rightarrow \infty} s_t=0$, and iii) $\sum_{t=1}^\infty s_t=\infty$, $\sum_{t=1}^\infty s^2_t<\infty$. We take the step-size to be $s_t=t^{-0.7}$ which satisfies these conditions. 

We investigate the convergence of the thresholds in OMA and NOMA settings when using TBS along with TLA. We consider a random user distribution and assume that the lower temporal demand vector is $\underline{w}^4=[0.1,0.2,0.3,0.4]$ which can be shown to be feasible for both communication settings as $\sum_i \underline{w}_i \leq 1$ \cite [Section IV-B]{shahsavari2019general}. Figure \ref{fig:fig}(a) depicts the time-evolution of the user thresholds when using TLA. We observe that the thresholds converge in both settings.


Next, we investigate the RoC for the thresholds and system utility in OMA and NOMA settings in Figures \ref{fig:fig}(b) and (c). We define \textit{RoC measure} at time-slot $t$, as $x_t=t^{1/2}\mathbb{E}_{R^{m\times t}}(||{\lambda^*}^n-\widehat{\lambda}^n_t||_{\infty})$ and $y_t=t^{1/2}\mathbb{E}_{R^{m\times t}}({U^*}-\widehat{U}_t)$ for the thresholds and the system utility, respectively. We note that if $x_t$ converges to a constant value, it indicates that the thresholds approach the optimal value as $\omega(\frac{1}{\sqrt(t)})$ when using TLA. According to Theorem \ref{th:zeta}, the RoC for system utility can be infinity if $\exists t': U_t>U^*,~t\geq t'$. Therefore, we do not include operator $|.|_+$ in $y_t$ to identify if $U_t$ converges $U^*$ from above. We approximate the expectation by empirical average over 100 channel realizations. Estimates of the optimal thresholds ${\lambda^*}^n$ and system utility $U^*$ are obtained by running TLA for $5\times10^6$ time-slots. Figures \ref{fig:fig}(b) and (c) depict $x_t$ and $y_t$ for the first $2\times10^5$ time-slots 
We observe that the sequence $x_t, t\in \mathbb{N}$ converges to a constant value in both OMA and NOMA settings, which is consistent with the result of Theorem \ref{th:th_RoC}. Furthermore, Figure 1 suggests that thresholds approach the optimum ones as $\omega\Big(\frac{1}{\sqrt{t}}\Big)$, leading the largest RoC of  $\alpha^*=\frac{1}{2}$ according to Theorem \ref{th:th_RoC}. On the other hand, we observe that $y_t$ is negative implying that sequence $\widehat{U}_t$ converges to $U^*$ from above when using TLA (note that the convergence of utility and thresholds is guaranteed according to Section \ref{subsec:lit}). The reason is that TLA starts from zero thresholds leading to a pure opportunistic scheduler at the beginning. Therefore, virtual users are initially chosen based on their performance value leading to a system utility higher than the ultimate system utility $U^*$ when temporal demands are satisfied. 
This is in agreement with Theorem \ref{th:zeta}.

\section{Conclusion}
We have considered multi-user scheduling under temporal fairness constraints. We have investigated the rates of convergence of the best mean square estimates of the scheduling threshold vector, when threshold based strategies are used. We have shown that the rate of convergence is at most $\frac{1}{2}$, so that the best threshold vector estimate is at least $\omega(\frac{1}{\sqrt{t}})$ away from the optimal threshold vector, where $t$ is the number of instances of users' channel realizations used in estimation. We have further show that the utility under long-term fairness constraints may approach the optimal utility from above. We have provided simulations of practical scenarios verifying our results. 
\label{Sec:Conc}

\begin{appendices}
\section{Proof of Theorem \ref{th:th_RoC}}
\label{Ap:th:th_Roc}
From Lemma \ref{lem:CR}, we have:
\begin{align*}
    \mathbb{E}_{R^{m\times t}}(||{\lambda^*}^n-\widehat{\lambda}^n_t||^2_2)\geq
    \mathbb{E}_{\tilde{R}^t} ^{-1}\left(\frac{\partial^2}{\partial {\lambda^*_{i'}}^2} {ln\left(f_{\tilde{R}^{t}}(\tilde{R}^{ t}; \underline{\theta})\right)}\right)
    =\frac{1}{t}\mathbb{E}_{\tilde{R}} ^{-1}\left(\frac{\partial^2}{\partial {\lambda^*_{i'}}^2} {ln\left(f_{\tilde{R}}(\tilde{R}; \underline{\theta})\right)}\right),
\end{align*}
where in the last equality we have used the fact that the channel realizations at different time-slots are independent to conclude that $f_{\tilde{R}^{t}}(\tilde{R}^{ t}; \underline{\theta})=
\prod_{i\in [t]}f_{\tilde{R}}(\tilde{R}_i; \underline{\theta})$, and the fact that the channel realizations are identically distributed in different time-slots to write $\mathbb{E}_{\tilde{R}^{t}}\left(\frac{\partial^2}{\partial {\lambda^*_{i'}}^2} {ln\left(f_{\tilde{R}^{t}}(\tilde{R}^{ t}; \underline{\theta})\right)}\right)= t \mathbb{E}_{\tilde{R}}\left(\frac{\partial^2}{\partial {\lambda^*_{i'}}^2} {ln\left(f_{\tilde{R}}(\tilde{R}^{ t}; \underline{\theta})\right)}\right)$. 

Consequently, we have shown that $\mathbb{E}_{R^{m\times t}}(||{\lambda^*}^n-\widehat{\lambda}^n_t||^2_2)\geq \frac{c}{t}$ for some constant $c>0$. As a result, from Definition \ref{def:th_RoC}, we conclude that $\alpha^*\leq \frac{1}{2}$ since
\begin{align*}
    \lim_{t \to \infty} 
    \frac{\mathbb{E}_{R^{m\times t}}(||{\lambda^*}^n-\widehat{\lambda}^n_t||^2_2)}{t^{-2\alpha}}\geq \lim_{t \to \infty} 
    \frac{ct^{-1}}{t^{-2\alpha}}= \infty, \quad \forall \alpha>\frac{1}{2}. 
\end{align*}
\section{Proof of Theorem \ref{th:zeta}}
\label{Ap:th:zeta}
From the theorem statement, the threshold RoC is equal to $\alpha^*>0$. So, there exists a family of threshold construction functions $f_t:\mathbb{R}^{m\times t} \to \mathbb{R}^n$ and constant $c\in \mathbb{R}$ for which 
$\mathbb{E}_{R^{m\times t}}(||{\lambda^*}^n-\widehat{\lambda}^n_t||^2_2)\leq \frac{1}{ct^{2\alpha^*}}$ for any $f_{R^m}\in \mathcal{P}$ and large enough $t\in \mathbb{N}$, where $\widehat{\lambda}_t^n=f(R^{m\times t})$. 
Consequently, by the Chebyshev inequality, we have:
\begin{align}
P(||{\lambda^*}^n-\widehat{\lambda}^n_t||_2> \frac{\sqrt{c}}{t^{0.5\alpha^*}})\leq \frac{\mathbb{E}_{R^{m\times t}}(||{\lambda^*}^n-\widehat{\lambda}^n_t||^2_2)}{(\sqrt{c}t^{-0.5\alpha^*})^2}\leq  \frac{1}{t^{\alpha^*}}.   
\label{Eq:pf31}
\end{align}

Note that $c$ and $\alpha^* $ are  universal constants which do not depend on the underlying distribution $f_{R^m}$. We assume that the scheduler can calculate these constants. Let $V_t(\lambda^n)$ be the utility due to the TBS with threshold $\lambda^n\in \mathbb{R}^n$ at time $t\in \mathbb{N}$. By definition, we have $V_t(\lambda^n)=R_{J}$, where $J=\argmax_{\mathcal{V}_j\in\mathsf{V}} ~S_{\lambda^n}\big(\mathcal{V}_j,R_{j,t}\big)$, where $S_{\lambda^n}\big(\mathcal{V}_j,R_{j,t}\big)$ is the scheduling measure of the $j$th user when the threhsold vector is $\lambda^n$. Note that $S_{\lambda^n}\big(\mathcal{V}_j,R_{j,t}\big)- S_{{\lambda'}^n}\big(\mathcal{V}_j,R_{j,t}\big)
\leq n\times max_{i\in [n]} |\lambda_i-\lambda'_i|\leq
n\times ||\lambda^n-{\lambda'}^n||_2$, where $\lambda^n,{\lambda'}^n\in \mathbb{R}^n$. Consequently, from Eqaution \eqref{Eq:pf31}, we have:
\[
P(|V_t({\lambda^*}^n)- V_t(\widehat{\lambda}^n_t)|\leq \frac{n\sqrt{c}}{t^{0.5\alpha^*}})\geq 1-\frac{1}{t^{\alpha^*}}.
\]
It follows that $\mathbb{E}_{R^{m\times t}}(|{U}({\lambda^*}^n)-{U}({\widehat{\lambda}}_t^n)|)\leq \frac{c'}{t^{0.5\alpha^*}}$, where $U(\lambda^n)$ is the average system utility due to the TBS with threshold vector $\lambda^n$ and $c'>0$ is constant in $t$. Furthermore, $U(0^n)>U(\lambda^n), \forall \lambda^n\neq 0^n$ with probability one since the TBS with the all-zero threshold vector always activates the virtual user which gives the maximum system utility. Let $\mathbb{E}(U(0^n))-\mathbb{E}(U({\lambda^*}^n))=\epsilon>0$. Let us fix a natural number $M>2$. We describe the operation of the scheduler in the resource blocks $t\in [\sum_{i=1}^{k-1}M^i(1+M^{\frac{i\alpha^*}{4}})+1, \sum_{i=1}^{k}M^i(1+M^{\frac{i\alpha^*}{4}})]$ for any $k\in \mathbb{N}$. In the first $M^k$ resource blocks, the scheduler uses the TBS with threshold vector $0^n$ for scheduling and in the next $M^{k(1+\frac{\alpha^*}{4})}$ it uses the TBS with threshold vector $\widehat{\lambda}_{M^{k\alpha^*}}^n$. It is straightforward to show that the expected cumulative utility in the first $M^k$ resource blocks is $\mathbb{E}((U^*+\epsilon) M^k)$ and in the next $M^{k+\frac{k\alpha^*}{4}}$ it is at least $\mathbb{E}((U^*-\frac{c'}{M^{\frac{k\alpha^*}{2}}}) M^{k+\frac{k\alpha^*}{4}})$. So, we have 
\begin{align*}
    |\mathbb{E}(tU^*-\sum_{i\in [t]} \widehat{U}^n_i)|_+\leq 
    |-\sum_{i\in [k]}(\epsilon- M^{-\frac{i\alpha^*}{4}})M^i|_+,
\end{align*}
with probability one for asymptotically large $k$, where $t= \sum_{i\in [k]} M^i(1+M^{\frac{k\alpha^*}{4}})$. The right hand side is equal to $0$ for large enough $k$. This completes the proof. 
\end{appendices}

\newpage
\bibliographystyle{IEEEtran}
\bibliography{reference}

\begin{thebibliography}{10}
\providecommand{\url}[1]{#1}
\csname url@samestyle\endcsname
\providecommand{\newblock}{\relax}
\providecommand{\bibinfo}[2]{#2}
\providecommand{\BIBentrySTDinterwordspacing}{\spaceskip=0pt\relax}
\providecommand{\BIBentryALTinterwordstretchfactor}{4}
\providecommand{\BIBentryALTinterwordspacing}{\spaceskip=\fontdimen2\font plus
\BIBentryALTinterwordstretchfactor\fontdimen3\font minus
  \fontdimen4\font\relax}
\providecommand{\BIBforeignlanguage}[2]{{%
\expandafter\ifx\csname l@#1\endcsname\relax
\typeout{** WARNING: IEEEtran.bst: No hyphenation pattern has been}%
\typeout{** loaded for the language `#1'. Using the pattern for}%
\typeout{** the default language instead.}%
\else
\language=\csname l@#1\endcsname
\fi
#2}}
\providecommand{\BIBdecl}{\relax}
\BIBdecl

\bibitem{kulkarni2003opportunistic}
S.~S. Kulkarni and C.~Rosenberg, ``Opportunistic scheduling for wireless
  systems with multiple interfaces and multiple constraints,'' in \emph{Proc.
  ACM Intl. Workshop on Modeling Analysis and Simulation of Wireless and Mobile
  Systems}, 2003.

\bibitem{shahsavari2019general}
S.~{Shahsavari}, F.~{Shirani}, and E.~{Erkip}, ``A general framework for
  temporal fair user scheduling in {NOMA} systems,'' \emph{IEEE Journal of
  Selected Topics in Signal Processing}, vol.~13, no.~3, pp. 408--422, June
  2019.

\bibitem{liu-jsac}
X.~Liu, E.~K.~P. Chong, and N.~B. Shroff, ``Opportunistic transmission
  scheduling with resource-sharing constraints in wireless networks,''
  \emph{IEEE J. Sel. Areas Commun.}, vol.~19, no.~10, pp. 2053--2064, 2001.

\bibitem{issariyakul2004throughput}
T.~Issariyakul and E.~Hossain, ``Throughput and temporal fairness optimization
  in a multi-rate {TDMA} wireless network,'' in \emph{2004 IEEE International
  Conference on Communications}, vol.~7, 2004, pp. 4118--4122.

\bibitem{rosenberg-short-term}
S.~S. Kulkarni and C.~Rosenberg, ``Opportunistic scheduling policies for
  wireless systems with short term fairness constraints,'' in \emph{2003 IEEE
  Global Telecommunications Conference}, vol.~1, Dec 2003, pp. 533--537 Vol.1.

\bibitem{ahlswede1973multi}
R.~Ahlswede, ``Multi-way communication channels,'' in \emph{Second
  International Symposium on Information Theory: Tsahkadsor, Armenia, USSR,
  Sept. 2-8, 1971}, 1973.

\bibitem{yu2004sum}
W.~Yu and J.~M. Cioffi, ``Sum capacity of {G}aussian vector broadcast
  channels,'' \emph{IEEE Transactions on Information Theory}, vol.~50, no.~9,
  pp. 1875--1892, 2004.

\bibitem{arxiv_PIMRC}
S.~Shahsavari, A.~Khojastepour, F.~Shirani, and E.~Erkip, ``Opportunistic
  temporal fair mode selection and user scheduling for full-duplex systems,''
  \emph{available on arxiv.org}, 2019.

\bibitem{shiranishortterm}
S.~{Shahsavari}, F.~{Shirani}, and E.~{Erkip}, ``On the fundamental limits of
  multi-user scheduling under short-term fairness constraints,'' in \emph{2019
  IEEE International Symposium on Information Theory (ISIT)}, July, pp.
  2534--2538.

\bibitem{miller1978modified}
R.~Miller and C.~Chang, ``A modified {C}ram{\'e}r-{R}ao bound and its
  applications (corresp.),'' \emph{IEEE Transactions on Information theory},
  vol.~24, no.~3, pp. 398--400, 1978.

\bibitem{takeuchi2006nonparametric}
I.~Takeuchi, Q.~V. Le, T.~D. Sears, and A.~J. Smola, ``Nonparametric quantile
  estimation,'' \emph{Journal of Machine Learning Research}, vol.~7, no. Jul,
  pp. 1231--1264, 2006.

\bibitem{hojo1931distribution}
T.~Hojo and K.~Pearson, ``Distribution of the median, quartiles and
  interquartile distance in samples from a normal population,''
  \emph{Biometrika}, pp. 315--363, 1931.

\bibitem{liu2016fairness}
Y.~Liu, M.~Elkashlan, Z.~Ding, and G.~K. Karagiannidis, ``Fairness of user
  clustering in {MIMO} non-orthogonal multiple access systems,'' \emph{IEEE
  Communications Letters}, vol.~20, no.~7, pp. 1465--1468, 2016.

\end{thebibliography}

\end{document}